\documentclass[letterpaper, 10pt, conference]{ieeeconf}
\IEEEoverridecommandlockouts                              

\usepackage[utf8]{inputenc}
\usepackage{amsmath, amssymb}
\usepackage{hyperref}
\usepackage{booktabs}
\usepackage{graphicx}
\usepackage{algorithm}
\usepackage{algpseudocode}
\usepackage{tabularx}
\usepackage{filecontents}
\usepackage{tikz}
\usetikzlibrary{positioning, arrows.meta}

\newtheorem{definition}{Definition}[section]

\newtheorem{assumption}{Assumption}[section]

\newtheorem{theorem}{Theorem}[section]
\newtheorem{lemma}{Lemma}[section]
\newtheorem{proposition}{Proposition}[section]

\newtheorem{remark}{Remark}[section]

\title{\LARGE \bf Compiling Spatial Certificates into Temporal Contracts for
  Latency-Aware Control}

\author{Avinash Malik
  \thanks{Avinash Malik is with Department of Electrical, Computer,
    and Software Engineering, University of Auckland, New Zealand {\tt\small avinash.malik@auckland.ac.nz}}%
}

\begin{document}

\maketitle

\begin{abstract}
  We introduce CIPS, a contract-driven execution abstraction for
  managing computational latency and sampled-data updates in
  safety-critical cyber-physical systems (CPS). A fundamental challenge
  in real-time control is that physical safety certificates are defined
  spatially, yet predicting their validity under non-zero computation
  and handoff latency requires online numerical integration of plant
  dynamics. CIPS resolves this operational dichotomy by systematically
  compiling heterogeneous spatial safety certificates into normalized,
  unit-rate temporal contracts entirely offline. This transformation
  abstracts complex plant dynamics, exposing a deterministic, $O(1)$
  temporal budget to a generic sampled-data scheduler. We formally prove
  that this architecture guarantees global hybrid safety invariance
  under bounded computational latency and asynchronous execution.
  Finally, we validate the framework via an autonomous vehicle braking
  benchmark, demonstrating a $60.5\times$ reduction in micro-architectural
  evaluation overhead compared to latency-aware event-triggered control
  (LA-ETC) while preserving safety bounds.
\end{abstract}

\section{Introduction}
\label{sec:intro}

Modern safety-critical Cyber-Physical Systems
(CPS)~\cite{alur2015principles} increasingly rely on formal spatial
certificates, such as Control Barrier Functions
(CBFs)~\cite{ames2019control} and Hamilton-Jacobi reachability
sets~\cite{bansal2017hamilton}, to guarantee operational correctness.
However, continuously evaluating these certificates and synthesizing
updated control laws is computationally prohibitive, necessitating
discrete, sampled-data execution. This introduces a fundamental
divergence between cyber execution and physical dynamics: \textit{a
  control policy and its associated certificate become computationally
  stale during the latency period required to generate a new input, even
  while the physical plant remains safe}.

Consider a practical motivating scenario: a controller is certified safe
for a vehicle state during Adaptive Cruise Control (ACC), but its
certificate is only periodically recomputed. A regeneration request,
such as computing a new control law, may take 20 ms to compute and
verify, while the current certificate has only 19 ms of certified
remaining validity. A standard naive scheduler that knows only that the
certificate is currently valid as a Boolean fact cannot determine
whether regeneration can complete safely before validity expires,
leading to safety violations. Conversely, an advanced scheduler could
predict this deadline dynamically by numerically integrating the plant's
future states online (e.g., Latency-Aware ETC~\cite{tabuada2007event}).
However, this not only introduces a severe runtime computational
bottleneck, but it also fundamentally assumes the scheduler has access
to a perfectly specified, continuous plant model at runtime, information
that is often underspecified or entirely unavailable to the scheduling
middleware. CIPS resolves this dichotomy. It exposes this missing
quantity as a consumable temporal persistence budget, eliminating the
need for online plant simulations by relying on contracts compiled
entirely offline.

CIPS provides a rigorously defined runtime scheduling abstraction to
manage this information staleness. CIPS does not model temporal
persistence of the information representation itself. Rather, it models
the persistence of the certificate's validity under continued execution.
CIPS does not replace existing control theories; rather, CIPS acts as an
information interface. CIPS does not make the plant model unnecessary
for certification; it makes the plant model unnecessary for the runtime
scheduling decision once a sound temporal persistence contract has been
compiled offline.

The central novelty of this work is the architectural pipeline:
transforming an offline spatial certificate into a normalized temporal
persistence contract, which is then consumed by a
plant-model-independent runtime scheduler. We explicitly make the
following three contributions spanning from the offline certificate to
global hybrid invariance:
\begin{enumerate}
\item \textbf{Persistence Compilation:} A formal compiler mapping
  certificate degradation inequalities into normalized unit-rate
  temporal persistence contracts.
\item \textbf{Contract-Based Scheduling:} A
  plant-model-independent sampled runtime scheduler that
  explicitly accounts for regeneration latency, asynchronous completion,
  rejection, and terminal fallback.
\item \textbf{Safety Composition Theorem:} A global hybrid invariance
  theorem showing that the above contracts compose under explicit
  handoff and offline fallback compatibility conditions.
\end{enumerate}

\begin{table}[htbp]
  \centering
  \caption{CIPS Information Flow and Hierarchical Proof Dependencies}
  \label{tab:cips_hierarchy}
  \scriptsize
  \begin{tabular}{@{}ll@{}}
    \hline
    \textbf{Hierarchy Stage} & \textbf{Data Transformation} \\
    \hline
    \textbf{1. Certificate Compiler} & \textbf{In:} Spatial $h_I(x)$, Decay $\alpha(s)$ \\
                                     & \textbf{Out:} Temporal Contract $\Phi_I(x)$ \\[1.2ex]
    \textbf{2. CIPS Contract}        & \textbf{In:} Temporal Contract $\Phi_I(x)$ \\
                                     & \textbf{Out:} Valid Bound $\Phi \le \tau^*$ \\[1.2ex]
    \textbf{3. Latency Safety}       & \textbf{In:} Bound $\Phi \le \tau^*$, Limits ($L_{\text{reg}}, \Delta t$) \\
                                     & \textbf{Out:} Safe Handoff or Fallback \\[1.2ex]
    \textbf{4. Global Hybrid Safety} & \textbf{In:} Safe Handoff or Fallback \\
                                     & \textbf{Out:} Certified System Safety \\
    \hline
  \end{tabular}
\end{table}

Table~\ref{tab:cips_hierarchy} outlines the hierarchical information
flow and mathematical proof dependencies within the CIPS framework. It
illustrates how continuous physical dynamics and spatial bounds are
systematically abstracted into discrete temporal guarantees, culminating
in global system safety. The pipeline begins with the
\textbf{Certificate Compiler} (Stage 1), detailed in
Section~\ref{sec:compilation}, which maps a physical spatial certificate
$h_I(x)$ and its continuous decay bound $\alpha(s)$ into a normalized
temporal contract $\Phi_I(x)$. This functional forms the basis of the
\textbf{CIPS Contract} (Stage 2), formalized in Section
\ref{sec:contract}, establishing that the certified persistence acts as
a valid, conservative lower bound to the exact physical horizon
($\Phi_I(x) \le \tau^*$). By shifting the abstraction from spatial states to
temporal budgets, the framework enables \textbf{Latency Safety} (Stage
3). As analyzed in Sections~\ref{sec:runtime} and~\ref{sec:scheduling},
the runtime scheduler consumes this temporal bound alongside discrete
system limits ($L_{\text{reg}}, \Delta t$) to evaluate trigger margins and
guarantee safe handoffs or terminal fallbacks. Finally, Stage 4 achieves
\textbf{Global Hybrid Safety} by composing these latency-aware
transitions into an indefinite system-wide safety invariant, which is
mathematically proved in Section~\ref{sec:scheduling} and operationally
managed by the sampled event-triggered engine in
Section~\ref{sec:runtime}.

\section{The CIPS Contract and Information Objects}
\label{sec:contract}

We first define the abstract interface that any valid CIPS
implementation must satisfy. Each closed-loop vector field $f_I$ is
assumed to admit a unique forward continuous trajectory.

\begin{definition}[Information Object and Induced Trajectory]
  An \textit{information object} $I_k \in \mathcal{I}$ encapsulates a specific set
  of parameters and control policies. While $I_k$ is actively executing,
  the physical plant evolves according to $\dot{x} = f_{I_k}(x)$. We
  denote the trajectory obtained by continuing the immutable object $I$
  from state $x_0$ at time $t_0$ as $x_I(t_0+\tau; x_0)$.
\end{definition}

\begin{definition}[Physical Safety Horizon]
  Each object $I$ is associated with a ground-truth physical admissible
  region $\mathcal{A}^*_I \subseteq \mathcal{X}$. For a state $x$ evaluated at time
  $t_0$, the \textbf{exact} physical safety horizon is:
\begin{equation}
\tau^*(x; I) := \inf\{\tau \ge 0 \mid x_I(t_0+\tau; x) \notin \mathcal{A}^*_I\}.
\end{equation}
For any boundary or external state $x \notin \mathcal{A}^*_I$, $\tau^*(x; I) = 0$.
\end{definition}

Because computing $\tau^*$ online is generally unavailable to the runtime
scheduler, CIPS relies on a one-sided, \textbf{conservative}
\textit{certified remaining-time lower bound}, termed the persistence
functional.

\begin{definition}[Certified Persistence Functional]
\label{def:cips_contract}
A valid CIPS system provides a computable persistence functional $\Phi_I : \mathcal{X} \to \mathbb{R}$ with $\Phi_I \in C(\mathcal{X})$, which intrinsically defines its certified persistence domain as $\mathcal{D}_I := \{x \in \mathcal{X} : \Phi_I(x) \ge 0\}$. This functional satisfies the following two axioms for all active objects $I$:
\begin{itemize}
    \item \textbf{Axiom 1 (Certificate Soundness):} $\Phi_I(x) \ge 0 \implies x \in \mathcal{A}^*_I$.
    \item \textbf{Axiom 2 (Temporal Persistence):} For every trajectory starting from $x(t_0) \in \mathcal{D}_I$,
    \begin{equation}
    \Phi_I(x(t_0+s)) \ge \Phi_I(x(t_0)) - s
    \end{equation}
    for every $s \in [0, t_e]$ up to the first exit time $t_e$ from $\mathcal{D}_I$.
\end{itemize}
\end{definition}

\begin{remark}
All temporal inequalities are understood only while the trajectory remains inside the certified persistence domain.
\end{remark}

\begin{lemma}[Certified Horizon]
\label{thm:hierarchy}
If $\Phi_I$ satisfies the CIPS Contract, then $0 \le \Phi_I(x) \le \tau^*(x; I)$ whenever $\Phi_I(x) \ge 0$. 
\end{lemma}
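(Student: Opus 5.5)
The lower bound $0 \le \Phi_I(x)$ is just the hypothesis, so the whole task is the upper bound $\Phi_I(x) \le \tau^*(x;I)$. I would argue by contradiction.

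Fix $x$ with $\Phi_I(x) = c \ge 0$ and write $x(t_0+s) := x_I(t_0+s;x)$ for the induced trajectory. Suppose $\tau^* := \tau^*(x;I) < c$. By Axiom~1, $x \in \mathcal{D}_I$ implies $x \in \mathcal{A}^*_I$, so $x$ is not an external state and $\tau^*$ is given by the infimum in the definition. The plan is to show that the trajectory cannot leave $\mathcal{A}^*_I$ at any time before $c$.

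\textbf{Step 1 (the trajectory stays in $\mathcal{D}_I$ before time $c$).} Let $t_e$ be the first exit time from $\mathcal{D}_I$. I claim $t_e \ge c$. Suppose instead $t_e < c$. By continuity of $\Phi_I$ and of the trajectory, $s \mapsto \Phi_I(x(t_0+s))$ is continuous. On $[0,t_e]$, Axiom~2 gives
\[
\Phi_I(x(t_0+s)) \ge c - s \ge c - t_e > 0 .
\]
In particular $\Phi_I(x(t_0+t_e)) > 0$. By continuity, $\Phi_I$ then stays positive on a neighbourhood $[t_e, t_e+\delta)$. This contradicts $t_e$ being the first exit time, whether the exit is defined as the infimum of times with $\Phi_I<0$ or as the supremum of the times during which the trajectory remains in $\mathcal{D}_I$. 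Hence the trajectory remains in $\mathcal{D}_I$ on $[0,c)$, and Axiom~2 applies on this whole interval.

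\textbf{Step 2 (conclude via Axiom~1).} For every $s\in[0,c)$ we have $\Phi_I(x(t_0+s)) \ge c-s > 0$, so Axiom~1 gives $x(t_0+s)\in\mathcal{A}^*_I$. Therefore the set $\{\tau\ge0 : x(t_0+\tau)\notin\mathcal{A}^*_I\}$ contains no point of $[0,c)$, and its infimum satisfies $\tau^* \ge c$. This contradicts the assumption $\tau^*<c$. If that set is empty, the infimum is $+\infty$ and the bound holds trivially.

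\textbf{Main obstacle.} The delicate point is Step~1. Axiom~2 is only stated up to the exit time, so one has to rule out the trajectory leaving $\mathcal{D}_I$ while $\Phi_I$ is still strictly positive. The continuity of $\Phi_I$ together with uniqueness and continuity of the forward trajectory does this work. I would handle the possible ambiguity in how $t_e$ is defined by the continuation argument above: if the trajectory were still in $\mathcal{D}_I$ with strictly positive margin at $t_e$, it would remain in $\mathcal{D}_I$ slightly beyond $t_e$, which is impossible. The boundary case $c=0$ is immediate, since then $0 \le \tau^*$ holds by definition.
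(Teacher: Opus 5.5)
Your proof is correct and follows the same route as the paper's: Axiom~2 gives $\Phi_I(x(t_0+s)) \ge c - s > 0$ on $[0,c)$, Axiom~1 then places the trajectory in $\mathcal{A}^*_I$ there, and hence $\tau^*(x;I) \ge c$ (the contradiction framing is unnecessary, since this argument proves the bound directly). Your Step~1 adds a first-exit-time continuation argument showing that Axiom~2 actually applies on all of $[0,c)$; the paper's proof uses this implicitly without checking it, so your version is more careful on that point.
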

\begin{proof}
  Let $\Phi_I(x(t_0)) = r \ge 0$. By Axiom 2, the trajectory satisfies
  $\Phi_I(x(t_0+s)) \ge \Phi_I(x(t_0)) - s = r - s$. Therefore, for every
  $0 \le s < r$, we strictly have $\Phi_I(x(t_0+s)) > 0$. By Axiom 1, this
  implies $x(t_0+s) \in \mathcal{D}_I \subseteq \mathcal{A}^*_I$. Since the state remains strictly
  inside the physically safe set for at least $r$ seconds, the exact
  physical horizon must satisfy $\tau^*(x(t_0); I) \ge r = \Phi_I(x(t_0))$.
\end{proof}

\section{Offline Certificate Compilation}
\label{sec:compilation}

The CIPS contract acts as an abstract specification. The
application-specific proof obligation is to construct a domain
certificate and compile it into a persistence functional $\Phi_I(x)$.

\begin{theorem}[Certificate-to-Persistence Compilation]
  \label{thm:compiler}
  Let $h_I : \mathcal{X} \to \mathbb{R}$ be a continuously differentiable domain certificate
  with a validity threshold $h_{\min}$. Assume explicitly that the domain
  $\mathcal{D}^h_I := \{x \in \mathcal{X} : h_I(x) \ge h_{\min}\}$ satisfies physical soundness:
  $\mathcal{D}^h_I \subseteq \mathcal{A}^*_I$. Assume trajectories $x(t)$ are continuous and there
  exists a function $\alpha \in C([h_{\min}, \infty); \mathbb{R}_{>0})$ such that the
  closed-loop system satisfies $\dot{h}_I(x) \ge -\alpha(h_I(x))$ for all
  $x \in \mathcal{D}^h_I$, and
  $\int_{h_{\min}}^{h_I(x)} \frac{ds}{\alpha(s)} < \infty \quad \forall x \in \mathcal{D}^h_I$. Define the
  functional explicitly as:
  \begin{equation}
    \Phi_I(x) =
    \begin{cases}
      \int_{h_{\min}}^{h_I(x)} \frac{ds}{\alpha(s)}, & h_I(x) \ge h_{\min} \\[2ex]
      -\operatorname{dist}(x, \mathcal{D}^h_I), & h_I(x) < h_{\min}
    \end{cases}
  \end{equation}
  This guarantees that the continuous functional $\Phi_I$ satisfies the CIPS Contract (Definition \ref{def:cips_contract}) on every trajectory up to the first exit from $\mathcal{D}^h_I$.
\end{theorem}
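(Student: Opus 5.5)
The plan is to verify, in order, continuity of $\Phi_I$, the identification $\mathcal{D}_I = \mathcal{D}^h_I$, Axiom 1, and Axiom 2. The last is a one-dimensional comparison argument along trajectories and carries the real content.

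\textbf{Step 1: Well-definedness and continuity.} Define $G(r) := \int_{h_{\min}}^{r} \frac{ds}{\alpha(s)}$ for $r \ge h_{\min}$. Since $\alpha$ is continuous and strictly positive on $[h_{\min},\infty)$, the integrand is continuous on every compact subinterval. Hence $G$ is finite, $C^1$, and strictly increasing, with $G(h_{\min}) = 0$ and $G' = 1/\alpha$. This also makes the assumed finiteness condition automatic. On $\mathcal{D}^h_I$ we then have $\Phi_I = G\circ h_I \ge 0$, which is continuous. Off $\mathcal{D}^h_I$ we have $\Phi_I = -\operatorname{dist}(\cdot,\mathcal{D}^h_I)$, which is $1$-Lipschitz and strictly negative, because $\mathcal{D}^h_I$ is closed (a preimage of a closed set under the continuous $h_I$).

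The delicate point is gluing the two pieces on the boundary. Take a sequence $x_n \to x$ with $h_I(x)=h_{\min}$ and $x_n \notin \mathcal{D}^h_I$. Then $\operatorname{dist}(x_n,\mathcal{D}^h_I) \le |x_n - x| \to 0$. For $x_n \in \mathcal{D}^h_I$ we have $G(h_I(x_n)) \to G(h_{\min}) = 0$. So both one-sided limits equal $0 = \Phi_I(x)$, and $\Phi_I \in C(\mathcal{X})$.

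\textbf{Step 2: Domain and Axiom 1.} Because the negative branch is strictly negative, $\mathcal{D}_I = \{\Phi_I \ge 0\} = \mathcal{D}^h_I$. Axiom 1 then follows directly from the physical soundness hypothesis $\mathcal{D}^h_I \subseteq \mathcal{A}^*_I$.

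\textbf{Step 3: Axiom 2.} Fix a trajectory with $x(t_0)\in\mathcal{D}_I$ and let $t_e$ be its first exit time. For $s\in[0,t_e]$, closedness of $\mathcal{D}^h_I$ and continuity of $x(\cdot)$ give $x(t_0+s)\in\mathcal{D}^h_I$. Set $\phi(s) := \Phi_I(x(t_0+s)) = G(h_I(x(t_0+s)))$.

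The function $s\mapsto h_I(x(t_0+s))$ is differentiable, since $h_I$ is $C^1$ and $x$ solves $\dot x = f_I(x)$. The chain rule then gives
\begin{equation*}
\dot\phi(s) = \frac{\dot h_I(x(t_0+s))}{\alpha(h_I(x(t_0+s)))} \ge \frac{-\alpha(h_I)}{\alpha(h_I)} = -1 ,
\end{equation*}
using $\alpha>0$ and the decay hypothesis, which holds on $\mathcal{D}^h_I$. Applying the mean value theorem to $\phi(s)+s$, which has nonnegative derivative on $[0,t_e]$, yields $\phi(s) \ge \phi(0) - s$. This is exactly Axiom 2.

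\textbf{Main obstacle.} The only real care lies in the boundary bookkeeping of Step 3. At $s=t_e$ the state lies on $\partial\mathcal{D}^h_I$, and the derivative bound must be used only where the hypothesis holds. I would handle this by applying the derivative estimate on the open interval $(0,t_e)$ and extending to the endpoints by continuity of $\phi$. Should differentiability of $t\mapsto x(t)$ only be available in the weaker sense of continuous trajectories, I would fall back on the Dini-derivative form of the comparison lemma. The argument is otherwise identical.
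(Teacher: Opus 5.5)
Your proposal is correct and follows the same route as the paper. Axiom 1 comes from identifying $\{\Phi_I \ge 0\}$ with $\mathcal{D}^h_I$ and invoking physical soundness, and Axiom 2 from the chain-rule bound $\frac{d}{dt}\Phi_I(x(t)) \ge -1$ inside $\mathcal{D}^h_I$, integrated (your mean value theorem step) up to the first exit. You also check several things the paper's proof leaves implicit: continuity across $\{h_I = h_{\min}\}$, strict negativity of the distance branch via closedness of $\mathcal{D}^h_I$, and that the finiteness hypothesis on the integral is automatic.
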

\begin{proof}
  By construction,
  $\Phi_I(x) \ge 0 \iff h_I(x) \ge h_{\min} \implies x \in \mathcal{D}^h_I \subseteq \mathcal{A}^*_I$,
  satisfying Axiom 1. For $x \notin \mathcal{D}^h_I$, the strictly negative distance
  penalization trivially ensures $\Phi_I(x) < 0$, yielding exact
  equivalence of the domains. For Axiom 2, while the trajectory remains
  inside $\mathcal{D}^h_I$, applying the chain rule gives
  $\dot{\Phi}_I(x(t)) = \frac{\dot{h}_I(x(t))}{\alpha(h_I(x(t)))} \ge -1$.
  Integrating this differential inequality from $t_0$ to $t_0 + s$ up to
  the first exit directly yields
  $\Phi_I(x(t_0+s)) - \Phi_I(x(t_0)) \ge -s$, satisfying Axiom 2.
\end{proof}

\begin{remark}
  \label{rem:bhat_bernstein_comparison}
  The integral transform is mathematically related to finite-time
  stability analysis by Bhat and Bernstein~\cite{bhat2000finite}, but
  its role and resulting guarantee are different. Their construction
  transforms a Lyapunov decay inequality into a bound on the time
  required to reach a target equilibrium. Here, the transformation is
  applied to a safety-certificate degradation inequality and yields a
  conservative lower bound on the time remaining before the certificate
  reaches its validity boundary. $\Phi_{I}(x)$ is subsequently exposed as a
  runtime contract, rather than merely being used as an analysis device
  for a particular controller.
\end{remark}

\begin{proposition}[Closure under Finite Intersections]
\label{prop:intersection}
Assume the composite object $I_{\text{comp}}$ induces a common vector
field $f_{I_{\text{comp}}} = f$ under which multiple certificates
$\Phi_i$ satisfy their temporal contracts along the same trajectory. Their
composite object has an admissible region defined as
$\mathcal{A}^*_{\text{comp}} := \bigcap_i \mathcal{A}^*_i$, and the single composite certificate
defined as $\Phi(x) = \min_i \Phi_i(x)$ is also CIPS-valid for
$\mathcal{A}^*_{\text{comp}}$.
\end{proposition}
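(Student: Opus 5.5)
We verify the requirements of Definition~\ref{def:cips_contract} for $\Phi(x)=\min_i \Phi_i(x)$, with $i$ ranging over a finite index set. Each requirement reduces to the corresponding property of the individual $\Phi_i$.

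\emph{Continuity and domain.} A pointwise minimum of finitely many continuous functions is continuous, so $\Phi\in C(\mathcal{X})$. The certified domain is
\[
\mathcal{D}:=\{x:\Phi(x)\ge 0\}=\bigcap_i \mathcal{D}_i ,
\]
since $\min_i \Phi_i(x)\ge 0$ holds exactly when every $\Phi_i(x)\ge 0$.

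\emph{Axiom~1.} Suppose $\Phi(x)\ge 0$. Then $\Phi_i(x)\ge 0$ for every $i$. By Axiom~1 for each $\Phi_i$, we get $x\in\mathcal{A}^*_i$ for all $i$. Hence $x\in\bigcap_i\mathcal{A}^*_i=\mathcal{A}^*_{\text{comp}}$.

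\emph{Axiom~2.} Fix a trajectory of the common field $f$ with $x(t_0)\in\mathcal{D}$. Let $t_e$ be its first exit time from $\mathcal{D}$, and let $t_e^i$ be its first exit time from $\mathcal{D}_i$. Every $\mathcal{D}_i$ contains $\mathcal{D}$, so on $[t_0,t_0+t_e]$ the trajectory lies in each $\mathcal{D}_i$, which gives $t_e\le t_e^i$ for every $i$. Because all certificates are evaluated along the same trajectory of $f$, each individual contract applies on $s\in[0,t_e]$:
\[
\Phi_i(x(t_0+s))\ge \Phi_i(x(t_0))-s\ge \min_j\Phi_j(x(t_0))-s=\Phi(x(t_0))-s .
\]
Taking the minimum over $i$ on the left yields $\Phi(x(t_0+s))\ge\Phi(x(t_0))-s$ for all $s\in[0,t_e]$, which is Axiom~2 for $\Phi$.

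\emph{Main obstacle.} The only delicate point is the exit-time bookkeeping: the individual contracts must be valid over the whole composite window. This is where the hypothesis of a common vector field $f_{I_{\text{comp}}}=f$ is needed, since it ensures the individual contracts all refer to the same trajectory. The inclusion $\mathcal{D}\subseteq\mathcal{D}_i$ then gives $t_e\le t_e^i$, so each contract holds throughout $[0,t_e]$. Finally, Lemma~\ref{thm:hierarchy} can be applied to $\Phi$, giving $0\le\Phi(x)\le\tau^*(x;I_{\text{comp}})$ whenever $\Phi(x)\ge 0$.
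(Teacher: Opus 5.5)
Your proof is correct and complete. The paper states this proposition with no proof at all, so there is no competing route to compare against. Your verification is the natural argument the paper leaves implicit:
\begin{itemize}
\item continuity of a finite minimum;
\item $\mathcal{D}=\bigcap_i\mathcal{D}_i$;
\item Axiom~1 checked componentwise;
\item Axiom~2 via $t_e\le t_e^i$, then taking the minimum over $i$.
\end{itemize}

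One small framing point concerns your ``main obstacle'' paragraph, which runs two separate hypotheses together. The exit-time comparison $t_e\le t_e^i$ follows from $\mathcal{D}\subseteq\mathcal{D}_i$ alone. The common-field hypothesis $f_{I_{\text{comp}}}=f$ plays a different role: it is what makes each individual Axiom~2 apply to the composite trajectory in the first place.
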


\section{Runtime Engine and Interface Implementation}
\label{sec:runtime}

The runtime execution induces a hybrid event system modeled by the state
tuple $q = (I, \sigma, x)$. Here, $I$ is the continuous active object,
$\sigma \in \{\text{idle}, \text{pending}, \text{terminal}\}$ is the discrete
scheduler mode, and the physical state flows as $\dot{x} = f_I(x)$.

To ensure plant-model-independence, the scheduler utilizes the runtime
contract
$\mathfrak{C}_I = (\mathcal{O}_I, L_{\text{reg}}, \Delta t, \tau_{\text{op}}, \rho_{\text{reset}},
\mathsf{Verify}, I_{\text{stop}})$. The oracle $\mathcal{O}_I$ evaluates
persistence $\Phi_I(x(t))$ statically from current measurements, avoiding
online numerical simulation. Additionally, $\tau_{\text{op}}(I)$ enforces a
maximum operational age limit, decoupling policy-level deadlines from
certificate-level physical safety.

We define $L_{\text{reg}}$ operationally as the \textit{worst-case
  regeneration trigger-to-certified-handoff latency}. This bound
explicitly encompasses computation, communication, verification, and
handoff time:
$L_{\text{reg}} \ge L_{\text{generation}} + L_{\text{communication}} +
L_{\text{verification}} + L_{\text{handoff}}$.

The scheduler manages updates via sampled evaluations $t_{k}$ (at
intervals $\Delta t$) and asynchronous completions (at times $t_c$),
utilizing two distinct evaluation margins:
\begin{itemize}
\item $\rho_{\text{trigger}} = L_{\text{reg}} + 2\Delta t$: The threshold to
  trigger regeneration (transitions from \textit{idle} to
  \textit{pending}). The $2\Delta t$ factor provides sampling-detection
  conservatism.
\item $\rho_{\text{reset}} > \Delta t$: The minimum persistence required to
  accept a new object (resets from \textit{pending} to \textit{idle}).
\end{itemize}

\begin{algorithm}[htbp]
  \caption{Sampled Event-Triggered CIPS Runtime Engine}
  \label{alg:cips_runtime_engine}
  \footnotesize
  \begin{algorithmic}[1]
    \Require Contract Tuple $\mathfrak{C}_I = (\mathcal{O}_I, L_{\text{reg}}, \Delta t, \tau_{\text{op}}, \rho_{\text{reset}}, \mathsf{Verify}, I_{\text{stop}})$
    \State Initialize: $I_{\text{active}} \gets I_0$, $\text{status} \gets \text{idle}$, $t_{\text{act}} \gets 0$
    \State $\gamma_{\text{last}}^{\text{samp}} \gets \mathcal{O}_{I_0}(t_0)$ 
    \State $\rho_{\text{trigger}} \gets L_{\text{reg}} + 2\Delta t$
    \While{$\text{status} \neq \text{terminal}$}
      \State \textbf{Wait for} Next Sample $t_k$ \textbf{or} Async Completion $t_c$
      
      \If{Event is Sample at $t_k$}
          \State $\gamma_k \gets \mathcal{O}_{I_{\text{active}}}(t_k)$ 
          
          \If{$\gamma_k \le \Delta t$} 
            \State Engage Fallback: Switch controller to $I_{\text{stop}}$
            \State $\text{status} \gets \text{terminal}$
            \State \textbf{break}
          \EndIf
    
          \State $t_{\text{expiry}} \gets t_{\text{act}} + \tau_{\text{op}}(I_{\text{active}})$
          \State $\text{deadline\_cond} \gets (t_k \ge t_{\text{expiry}})$
          
          \If{$\text{deadline\_cond}$ \textbf{and} $\gamma_k \le L_{\text{reg}} + \Delta t$}
            \State Engage Fallback: Switch controller to $I_{\text{stop}}$
            \State $\text{status} \gets \text{terminal}$
            \State \textbf{break}
          \EndIf

          \State $\text{Cross}_k \gets (\gamma_{\text{last}}^{\text{samp}} > \rho_{\text{trigger}} \textbf{ and } \gamma_k \le \rho_{\text{trigger}})$
          \State $\text{DeadlineTrigger}_k \gets (\text{deadline\_cond} \textbf{ and } \gamma_k > L_{\text{reg}} + \Delta t)$
          
          \If{$\text{status} = \text{idle}$ \textbf{and} $(\text{Cross}_k \textbf{ or } \text{DeadlineTrigger}_k)$}
            \State $I_{\text{next}} \gets \text{StartRegeneration}(I_{\text{active}})$
            \State $\text{status} \gets \text{pending}$
          \EndIf
          
          \State $\gamma_{\text{last}}^{\text{samp}} \gets \gamma_k$
      \EndIf

      \If{Event is Async Completion at $t_c$ \textbf{and} $\text{status} = \text{pending}$}
         \State $\gamma_{\text{handoff}} \gets \mathcal{O}_{I_{\text{next}}}(t_c)$ 
         \If{$\mathsf{Verify}(I_{\text{next}})$ \textbf{and} $\gamma_{\text{handoff}} \ge \rho_{\text{reset}}$} 
             \State $I_{\text{active}} \gets I_{\text{next}}$, $t_{\text{act}} \gets t_c$
             \State $\gamma_{\text{last}}^{\text{samp}} \gets \infty$ \Comment{Sentinel to await first sample post-handoff}
         \EndIf
         \State $\text{status} \gets \text{idle}$
      \EndIf
    \EndWhile
  \end{algorithmic}
\end{algorithm}

Algorithm~\ref{alg:cips_runtime_engine} operationalizes the state
machine. Following initialization (lines 1--3), the engine enters an
indefinite event loop, waking upon either an asynchronous completion or
a sampled evaluation (line 5). Sampled events (lines 6--27) explicitly
take precedence if both events co-occur, rigorously managing all
critical safety bounds and regeneration triggers. The engine first
checks for physical unsafety (lines 8--12) or a policy deadline violation
coupled with insufficient buffer (lines 15--19); if either condition is
met, it abandons the active object and breaks into the \textit{terminal}
fallback state. Provided the system remains safe, the scheduler
evaluates the crossing and deadline thresholds (lines 22--25). If a
threshold is crossed while the system is \textit{idle}, it initiates the
generation of a new object and transitions the state to \textit{pending}
(lines 22--25). Conversely, if an asynchronous generation completes while
the system is \textit{pending} (lines 28--35), the oracle evaluates the
new candidate. If the candidate passes verification and meets the reset
margin ($\gamma_{\text{handoff}} \ge \rho_{\text{reset}}$), the object is
successfully handed off, and the system resets to the \textit{idle}
state.

While Algorithm \ref{alg:cips_runtime_engine} guarantees strict safety
by defaulting to a fallback state upon threshold violations, it does not
guarantee regeneration liveness. Delayed or rejected candidates will
simply cause the active object to safely deplete its persistence budget
until fallback is engaged.

\section{Latency-Aware Runtime Scheduling and Global Safety}
\label{sec:scheduling}

Having defined the discrete event-triggered states and hybrid mechanics
of the runtime engine in Section~\ref{sec:runtime}, we now formally
verify its global safety invariants based on the prescribed assumptions.

\begin{assumption}[Runtime Scheduling Semantics]
\label{assum:runtime_semantics}
The event-triggered architecture explicitly enforces the following conditions:
\begin{enumerate}
    \item \textbf{Bounded Latency \& Periodicity:} $L_{\text{reg}} < \infty$ and sampled events occur periodically every $\Delta t > 0$. 
    \item \textbf{Active-Object Immutability:} $I_{\text{active}}(t)$ remains strictly unchanged during a pending regeneration (both while being generated and upon rejection).
    \item \textbf{Handoff State Continuity:} At every nominal transition $I_{\text{active}} \to I_{\text{next}}$, $x(t_{\text{sw}}^+) = x(t_{\text{sw}}^-)$.
    \item \textbf{Verification Soundness:} A candidate $I_{\text{next}}$ is accepted only if its independently verified persistence satisfies $\Phi_{I_{\text{next}}}(x(t_{\text{sw}}^+)) \ge \rho_{\text{reset}}$.
    \item \textbf{Terminal Mode Safety:} The terminal mode $I_{\text{stop}}$ has a domain $\mathcal{D}_{\text{stop}} \subseteq \mathcal{A}^*_{\text{stop}}$ that is forward invariant: $x(t_0) \in \mathcal{D}_{\text{stop}} \implies x(t) \in \mathcal{D}_{\text{stop}} \quad \forall t \ge t_0$. 
    \item \textbf{Fallback Instantaneousness:} Switching to $I_{\text{stop}}$ occurs instantly upon violation of the $\Delta t$ sampled margin.
    \item \textbf{Event Priority (Tie-breaking):} If a sampled event at
      $t_k$ and an asynchronous completion event at $t_c$ occur
      simultaneously ($t_k = t_c$), the sample event strictly takes
      precedence to ensure deterministic hybrid semantic resolution.
\end{enumerate}
\end{assumption}

\begin{assumption}[Offline Fallback Compatibility]
\label{assum:fallback}
Define the reachable fallback switching set for an object $I$ as $\mathcal{S}_{\text{fb}, I} = \{x : 0 < \Phi_I(x) \le \Delta t\}$, and let $\mathcal{S}_{\text{fb}} = \bigcup_I \mathcal{S}_{\text{fb}, I}$. We require $\mathcal{S}_{\text{fb}} \subseteq \mathcal{D}_{\text{stop}}$. Fallback compatibility is intentionally an offline verification obligation. It guarantees that whenever CIPS decides the active certificate is approaching expiry, the current state lies inside the terminal controller's safe domain.
\end{assumption}

\begin{theorem}[Sampled Latency Safety]
\label{thm:latency_safe}
Under Assumption \ref{assum:runtime_semantics}, if regeneration begins
at a sampled instant $t_k$ and
$\Phi_{I_{\text{active}}}(x(t_k)) > L_{\text{reg}} + \Delta t$, then at any asynchronous
completion time $t_c$ satisfying $0 \le t_c - t_k \le L_{\text{reg}}$, the
active object retains at least $\Delta t$ of certified persistence:
$\Phi_{I_{\text{active}}}(x(t_c)) > \Delta t$.
\end{theorem}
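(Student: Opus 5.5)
The plan is to apply Axiom 2 of the CIPS Contract (Definition~\ref{def:cips_contract}) along the trajectory generated by the unchanged active object on $[t_k, t_c]$. First, by Assumption~\ref{assum:runtime_semantics}(2) (Active-Object Immutability), $I_{\text{active}}$ is constant on the interval $[t_k, t_c]$ while regeneration is pending. By Assumption~\ref{assum:runtime_semantics}(7), any completion coinciding with a sample is processed after that sample, so no handoff occurs strictly before $t_c$. Hence on this interval the state evolves as $x(t) = x_{I}(t; x(t_k))$ with $I := I_{\text{active}}$. Set $r := \Phi_I(x(t_k))$. By hypothesis $r > L_{\text{reg}} + \Delta t > 0$, so $x(t_k) \in \mathcal{D}_I$ and the contract applies from $t_0 = t_k$.

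Second, I need the trajectory to stay in $\mathcal{D}_I$ throughout $[t_k, t_c]$, so that Axiom 2 holds at $s = t_c - t_k$. This is the main obstacle, because Axiom 2 is only asserted up to the first exit time $t_e$ from $\mathcal{D}_I$. I would argue by contradiction, mirroring the proof of Lemma~\ref{thm:hierarchy}. Suppose $t_e < t_k + r$. On $[t_k, t_e]$, Axiom 2 gives $\Phi_I(x(t)) \ge r - (t - t_k)$. By continuity of $\Phi_I$ and of $x(\cdot)$, we get $\Phi_I(x(t_e)) \ge r - (t_e - t_k) > 0$. Then $x(t_e)$ lies in the interior of the superlevel set, and continuity keeps $\Phi_I(x(t)) > 0$ on a neighbourhood of $t_e$, which contradicts $t_e$ being an exit time. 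Therefore $t_e \ge t_k + r > t_k + L_{\text{reg}} + \Delta t > t_c$, since $t_c - t_k \le L_{\text{reg}}$. So the whole interval lies before the first exit.

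Third, I evaluate Axiom 2 at $s = t_c - t_k \in [0, L_{\text{reg}}]$:
\[
\Phi_I(x(t_c)) \ge r - (t_c - t_k) > L_{\text{reg}} + \Delta t - L_{\text{reg}} = \Delta t.
\]
This gives the claimed strict margin. I would close with two remarks. The bound holds whether the completion is accepted or rejected, since it concerns the object active just before $t_c$; by Assumption~\ref{assum:runtime_semantics}(3) (Handoff State Continuity), $x(t_c^-) = x(t_c^+)$, so there is no ambiguity in evaluating $x(t_c)$. Intermediate samples in $(t_k, t_c)$ cannot trigger fallback, because the same inequality gives $\gamma > \Delta t$ at each of them. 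The deadline-fallback branch needs $\gamma \le L_{\text{reg}} + \Delta t$, which could in principle fire. I would note that the statement concerns certified persistence along the active trajectory, and that this quantity is unaffected unless a terminal switch occurs. That case is handled separately by Assumption~\ref{assum:fallback}.
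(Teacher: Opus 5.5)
Your proposal is correct and follows the paper's proof, which is exactly the one-line chain $\Phi_{I_{\text{active}}}(x(t_c)) \ge \Phi_{I_{\text{active}}}(x(t_k)) - (t_c - t_k) \ge \Phi_{I_{\text{active}}}(x(t_k)) - L_{\text{reg}} > \Delta t$ obtained from Axiom 2. Your extra first-exit-time argument, which shows the trajectory cannot leave $\mathcal{D}_I$ before $t_k + \Phi_I(x(t_k))$, makes explicit a step the paper leaves implicit when it applies Axiom 2 at $s = t_c - t_k$.
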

\begin{proof}
At $t_c \le t_k + L_{\text{reg}}$, applying Axiom 2 yields $\Phi_{I_{\text{active}}}(x(t_c)) \ge \Phi_{I_{\text{active}}}(x(t_k)) - (t_c - t_k) \ge \Phi_{I_{\text{active}}}(x(t_k)) - L_{\text{reg}} > L_{\text{reg}} + \Delta t - L_{\text{reg}} = \Delta t$.
\end{proof}

\begin{theorem}[Global Hybrid Safety Invariance]
\label{thm:global_safety}
Let $\mathcal{Q} = \{(I, x) : x \in \mathcal{A}^*_I\}$ define the hybrid safety invariant.
This reflects that CIPS does not assume a universal safe set, but rather
a controller-indexed family of safe sets
$\{\mathcal{A}_I^*\}_{I \in \mathcal{I}}$. Suppose every object
$I \in \mathcal{I}$ satisfies the CIPS contract, and assume
$\rho_{\text{reset}} > \Delta t > 0$ with
$\Phi_{I_0}(x_0) \ge \rho_{\text{reset}}$. Under Assumptions
\ref{assum:runtime_semantics} and \ref{assum:fallback}, if the scheduler
executes regeneration upon detecting a valid trigger and invokes
fallback at sample $t_k$ whenever
$\Phi_{I_{\text{active}}}(x(t_k)) \le \Delta t$, then
$(I(t), x(t)) \in \mathcal{Q}$ for all $t \ge 0$.
\end{theorem}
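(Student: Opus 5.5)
The plan is an induction over the sequence of discrete events, maintaining the stronger invariant
\[
\mathcal{J}:\quad \text{either } \sigma \neq \text{terminal} \text{ and } \Phi_{I_{\text{active}}}(x(t)) > 0, \quad\text{or}\quad \sigma = \text{terminal} \text{ and } x(t) \in \mathcal{D}_{\text{stop}}.
\]
The first branch of $\mathcal{J}$ gives $x(t) \in \mathcal{A}^*_{I_{\text{active}}}$ by Axiom 1 (equivalently, Lemma~\ref{thm:hierarchy}). The second gives $x(t) \in \mathcal{A}^*_{\text{stop}}$ by Assumption~\ref{assum:runtime_semantics}(5). In either case $(I(t),x(t)) \in \mathcal{Q}$. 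Because $\Delta t > 0$ is fixed and the engine has at most one pending completion per trigger, the event times are locally finite. The timeline therefore splits into flow intervals between consecutive events, and induction over them is well founded.

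\textbf{Base case and flow between events.} At $t=0$ we have $\Phi_{I_0}(x_0) \ge \rho_{\text{reset}} > \Delta t > 0$, so $\mathcal{J}$ holds. While a non-terminal object $I$ is active, the object stays fixed by Assumption~\ref{assum:runtime_semantics}(2). Axiom 2 then gives $\Phi_I(x(t)) \ge \Phi_I(x(t_j)) - (t-t_j)$, where $t_j$ is the last sample or handoff instant. The key quantitative claim is that every non-terminal segment starts with $\Phi_I > \Delta t$, and the next sample occurs within $\Delta t$. Hence $\Phi_I$ stays strictly positive up to and including that sample, and $x$ stays in $\mathcal{D}_I \subseteq \mathcal{A}^*_I$. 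This uses the exit-time qualification of Axiom 2: since $\Phi_I$ is bounded below by a positive quantity on the interval, the trajectory cannot leave $\mathcal{D}_I$ before the bound expires. Continuity of $\Phi_I$ and of $x(t)$ makes this rigorous.

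\textbf{Event cases.} There are three discrete transitions to check.

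1. \emph{Sample with $\gamma_k > \Delta t$, no fallback.} The invariant $\Phi > \Delta t$ is re-established at $t_k$, so the next flow segment is covered by the previous paragraph.

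2. \emph{Sample with $\gamma_k \le \Delta t$, fallback.} By the preceding flow argument $\gamma_k > 0$, so $x(t_k) \in \mathcal{S}_{\text{fb}} \subseteq \mathcal{D}_{\text{stop}}$ by Assumption~\ref{assum:fallback}. Switching is instantaneous by Assumption~\ref{assum:runtime_semantics}(6), and forward invariance of $\mathcal{D}_{\text{stop}}$ keeps $x(t) \in \mathcal{D}_{\text{stop}}$ for all later $t$. This closes the terminal branch of $\mathcal{J}$. The deadline fallback on lines 15--19 occurs only when $\gamma_k > \Delta t$, since the first check did not fire. Its state then satisfies $\Phi > \Delta t$ rather than lying in $\mathcal{S}_{\text{fb}}$, and I must handle it separately. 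I will either read Assumption~\ref{assum:runtime_semantics}(6) and the theorem's hypothesis as restricting fallback to the $\Phi \le \Delta t$ trigger, or note that this is a gap requiring $\mathcal{D}_{\text{stop}}$ to cover these states as well.

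3. \emph{Asynchronous completion at $t_c$.} On acceptance, Assumptions~\ref{assum:runtime_semantics}(3),(4) give $\Phi_{I_{\text{next}}}(x(t_c^+)) \ge \rho_{\text{reset}} > \Delta t$, so a fresh segment begins with the required margin. On rejection, the active object is unchanged by Assumption~\ref{assum:runtime_semantics}(2). Its persistence is still at least $\Phi(x(t_{k'})) - (t_c - t_{k'}) > 0$, where $t_{k'} < t_c$ is the last sample, because the segment began above $\Delta t$ and $t_c - t_{k'} < \Delta t$. In the triggered case, Theorem~\ref{thm:latency_safe} can also be invoked directly to get $\Phi > \Delta t$ at $t_c$.

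\textbf{Main obstacle.} The delicate step is ensuring that every segment between consecutive samples begins with $\Phi > \Delta t$, so that positivity persists until the next sample. This holds at samples by the fallback test, and at handoffs by $\rho_{\text{reset}} > \Delta t$. The subtle case is the sentinel reset after a handoff occurring between samples. There the next sample can be up to $\Delta t$ away, while the guarantee is only $\rho_{\text{reset}} - \Delta t > 0$ at that sample. That suffices for positivity, which is all $\mathcal{J}$ requires. I will state explicitly that $\mathcal{J}$ demands only $\Phi > 0$ between samples, and that the $\Delta t$ margin is re-checked at each sample.
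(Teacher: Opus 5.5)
Your proposal is correct and follows essentially the paper's own case analysis. Axiom~2 carries positivity of $\Phi$ across each inter-sample flow, a handoff restarts the budget at $\rho_{\text{reset}} > \Delta t$, and any state at which the $\Phi \le \Delta t$ fallback fires has $0 < \Phi \le \Delta t$, so it lies in $\mathcal{S}_{\text{fb}} \subseteq \mathcal{D}_{\text{stop}}$.

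Two of your refinements are sharper than the paper's proof, and one aside is slightly overstated:
\begin{itemize}
\item Measuring each segment from the last sample \emph{or handoff} is tighter than the paper's fallback step, which tacitly assumes the object active at $t_k$ was already active at $t_{k-1}$.
\item The deadline fallback of lines 15--19 that you flag is genuinely not covered by the paper's proof. Reading the theorem as concerning a scheduler whose only fallback is the $\Phi \le \Delta t$ test is exactly what the paper implicitly does.
\item Your side remark invoking Theorem~\ref{thm:latency_safe} at every completion holds only when $\gamma_k > L_{\text{reg}} + \Delta t$ at the trigger, but nothing in your argument depends on it.
\end{itemize}
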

\begin{proof}
We prove $(I(t), x(t)) \in \mathcal{Q}$ holds across all operating modes by establishing the invariant that $x(t) \in \mathcal{A}^*_{I(t)}$ at all times, alongside: \textit{At every sampling instant, either $\Phi_{I_{\text{active}}}(x(t_k)) > \Delta t$ or fallback is immediately invoked.}
\begin{itemize}
    \item \textit{Normal Inter-Sample Execution:} For any sample $t_k$ where fallback is not invoked, $\Phi_{I_{\text{active}}}(x(t_k)) > \Delta t$. Axiom 2 ensures $x(t) \in \mathcal{D}_{I_{\text{active}}} \subseteq \mathcal{A}^*_{I_{\text{active}}}$ for $t \in [t_k, t_{k+1})$.
    \item \textit{Regeneration Pending / Rejected:} Suppose regeneration starts at $t_k$ with $\gamma_k = \Phi_{I_{\text{active}}}(x(t_k)) > \Delta t$. 
    \textbf{Regime A:} If $\gamma_k > L_{\text{reg}} + \Delta t$, then Theorem \ref{thm:latency_safe} guarantees $\Phi_{I_{\text{active}}}(x(t_c)) > \Delta t$ at every completion time $t_c \le t_k + L_{\text{reg}}$. 
    \textbf{Regime B:} If $\Delta t < \gamma_k \le L_{\text{reg}} + \Delta t$, regardless of whether regeneration completes before the next sample, Axiom 2 gives $\Phi_{I_{\text{active}}}(x(t)) > 0$ for $t \in [t_k, t_{k+1}]$ because $t_{k+1} - t_k = \Delta t < \gamma_k$. Hence the active object remains physically safe until the next sample. At $t_{k+1}$, either a successful candidate has been handed off, or the scheduler invokes fallback if the sampled persistence is $\le \Delta t$. In both regimes, safety is preserved.
    \item \textit{Successful Handoff:} State continuity and $\Phi_{I_{\text{next}}}(x(t_{\text{sw}})) \ge \rho_{\text{reset}} > \Delta t$ establish the successor object remains safe for at least the next interval.
    \item \textit{Fallback Invocation:} If $\Phi_{I_{\text{active}}}(x(t_k)) \le \Delta t$ is observed, fallback is immediately invoked. Since $\Phi_{I_{\text{active}}}(x(t_{k-1})) > \Delta t$ at the preceding sample $t_{k-1} = t_k - \Delta t$, Axiom 2 guarantees $\Phi_{I_{\text{active}}}(x(t_k)) \ge \Phi_{I_{\text{active}}}(x(t_{k-1})) - \Delta t > 0$. Thus, immediately prior to fallback at $t_k$, the active object satisfies $0 < \Phi_{I_{\text{active}}}(x(t_k)) \le \Delta t$. This implies $x(t_k) \in \mathcal{S}_{\text{fb}, I_{\text{active}}} \subseteq \mathcal{S}_{\text{fb}}$. By Assumption \ref{assum:fallback}, $\mathcal{S}_{\text{fb}} \subseteq \mathcal{D}_{\text{stop}}$, establishing indefinite safety under $I_{\text{stop}}$.
\end{itemize}
\end{proof}

\section{Illustrative Autonomous Vehicle Certificate \& Experimental
  Validation}
\label{sec:case_study}

\subsection{Experimental Setup}
To conceptually illustrate and validate CIPS, we implement a dynamic
Adaptive Cruise Control (ACC) benchmark operating under bounded
kinematics representing a leader-follower scenario. The physical state
of the system is given by
$x = [p_{\text{lead}}, v_{\text{lead}}, p_{\text{follow}},
v_{\text{follow}}]^T$.

The follower vehicle tracks a leader and must regenerate its control
action $I = u_{\text{follow}}$ safely in the presence of computational
latencies. We enforce strict kinematic constraints: a maximum braking
deceleration $a_b = 6.0 \text{ m/s}^2$, a maximum forward acceleration
$u_{\max} = 2.0 \text{ m/s}^2$, and a maximum allowable velocity
$v_{\max} = 20.0 \text{ m/s}$. Additionally, we mandate a ground-truth
physical minimum boundary $d_{\min} = 3.0 \text{ m}$ representing an
absolute collision threshold, and a measurement uncertainty margin
$\eta = 0.5 \text{ m}$.

\subsection{CIPS Certificate Derivation}
To compile spatial bounds into a temporal persistence contract offline, we define the robust safety threshold $d_{\text{robust}}$:
$d_{\text{robust}} = d_{\min} + \eta + d_{\text{kin}} = 3.0 + 0.5 + 2.05 = 5.55\text{ m}$, 
where $d_{\text{kin}}$ is the kinematic buffer for braking and actuator latency:
$d_{\text{kin}} = v_{\text{rel}} T_{\text{react}} + \frac{v_{\text{rel}}^2}{2 a_b} = (4.0)(0.18) + \frac{4.0^2}{2(6.0)} = 2.05\text{ m}$.

The \textit{spatial certificate} margin is:
$h(x) = (p_{\text{lead}} - p_{\text{follow}}) - d_{\text{robust}}$. 
To avoid open-loop over-conservatism, we bound the decay rate $\alpha(s)$ of $h(x)$. The domain depletion rate is the relative closing speed:
$-\dot{h}(x) = v_{\text{follow}} - v_{\text{lead}} := v_{\text{rel}}$. 
The worst-case relative acceleration closing the gap is:
$a_{\text{rel}}^{\max} = u_{\max} + a_b = 2.0 + 6.0 = 8.0\text{ m/s}^2$.

We explicitly restrict the certified operating domain to states
satisfying the kinematic braking envelope
$v_{\text{rel}}^2 \le v_b^2 + 2 a_{\text{rel}}^{\max} h(x)$, which is
enforced by the offline certificate and controller construction.
Consequently, enforcing a boundary closing-speed $v_b = 4.0\text{ m/s}$
at $h(x) = 0$ yields the kinematic degradation bound:
$-\dot{h}(x) = v_{\text{rel}} \le \alpha(h(x)) = \sqrt{v_b^2 + 2
  a_{\text{rel}}^{\max} h(x)}$

Applying Theorem~\ref{thm:compiler}, the CIPS temporal persistence
functional maps the spatial state into certified validity seconds:
$\Phi(x) = \int_{0}^{h(x)} \frac{ds}{\sqrt{v_b^2 + 2 a_{\text{rel}}^{\max} s}}
= \frac{\sqrt{v_b^2 + 2 a_{\text{rel}}^{\max} h(x)} -
  v_b}{a_{\text{rel}}^{\max}}$

Substituting system parameters yields the closed-form contract:
$\Phi(x) = \frac{\sqrt{16.0 + 16.0 \, h(x)} - 4.0}{8.0}$. This satisfies
$\dot{\Phi}(x) \ge -1$ along all valid trajectories, enabling the runtime
scheduler to evaluate $\Phi(x)$ in $O(1)$ deterministic execution time
without online ODE propagation.

\subsection{Scheduler Implementations}
We evaluated five runtime scheduling policies under latency constraints.
Table~\ref{tab:implementations} outlines their trigger conditions and
evaluation inputs.

\begin{table}[htbp]
  \centering
  \caption{Scheduler Implementations and Trigger Conditions}
  \label{tab:implementations}
  \scriptsize
  \begin{tabular}{l l l}
    \toprule
    \textbf{Scheduler} & \textbf{Evaluation Input} & \textbf{Trigger Condition} \\
    \midrule
    Periodic & Sampling period $dt$ & Time elapsed $\ge P$ \\
    Boolean (Reactive) & Inter-vehicle gap $d(t)$ & $d(t) \le d_{\min}$ \\
    Spatial ETC & Inter-vehicle gap $d(t)$ & $d(t) \le d_{\text{robust}}$ \\
    LA-ETC & Online reachability & $d(t + L_{\text{reg}} + 2\Delta t) \le d_{\text{robust}}$ \\
    CIPS (Proposed) & Persistence $\Phi(x)$ & $\Phi(x) \le L_{\text{reg}} + 2\Delta t$ \\
    \bottomrule
  \end{tabular}
\end{table}

Crucially, LA-ETC has runtime access to the complete plant model $f(x)$
and numerically integrates the closed-loop dynamics forward over
$L_{\text{pred}} = L_{\text{reg}} + 2\Delta t$ at every sample $t_k$ using
explicit 4th-order Runge-Kutta (RK4) integration
($\delta t = 1\text{ ms}$) under worst-case leader braking
($a_{\text{lead}} = -a_b$). In contrast, CIPS performs no forward
propagation and consumes only the precompiled scalar persistence
functional $\Phi_{I}(x)$.

\subsection{Experimental Results \& Evaluation Methodology}

The schedulers were subjected to structured robustness sweeps by varying
both the regeneration latency $L_{\text{reg}}$ (Table~\ref{tab:sweep1})
and the sampled evaluation interval $\Delta t$ (Table~\ref{tab:sweep2}).
Algorithms were compiled in C++-20 with \texttt{-O2} optimizations.

To evaluate runtime overhead, evaluation latency (\texttt{Eval}) was
measured on the Pico-W Cortex-M0+ ARM architecture running upto 133 MHz,
with 16 KB shared data and program L1 cache and 256 KB SRAM. Reported
timing values (Table~\ref{tab:microbenchmark}) reflect the median
execution time per evaluation over $10,000$ iterations.

We evaluate Latency Efficiency ($E_{\text{lat}}$) as the percentage of
the safe temporal budget utilized at the effective regeneration trigger:
$E_{\text{lat}} = \left( 1 -
  \frac{\Phi(x_{\text{trigger}})}{\Phi_{\text{initial}}} \right) \times 100\%$,
where $\Phi_{\text{initial}}$ is the initial persistence budget upon object
activation, and $\Phi(x_{\text{trigger}})$ is the remaining persistence
functional evaluated at the trigger instant. If no regeneration is
required during an evaluation run, the effective trigger is defined at
$\Phi(x_{\text{trigger}})=0$, yielding $E_{\text{lat}}=100\%$.

\begin{table}[htbp]
  \centering
  \caption{Comprehensive Scheduling Metrics \& Robustness Sweep (Varying $L_{\text{reg}}$)}
  \label{tab:sweep1}
  \resizebox{\columnwidth}{!}{%
    \begin{tabular}{lrrrrrrrr}
      \toprule
      Scheduler & L\_reg(ms) & Total Reg & Unnec Reg & Miss/Late & Fallbacks & Min H(m) & Violations & E\_lat (\%) \\
      \midrule
      Periodic & 10 & 625 & 625 & 0 & 0 & 0.74 & 0 & 75.0 \\
      Boolean & 10 & 0 & 0 & 1 & 1 & -12.45 & 1 & 100.0 \\
      Spatial ETC & 10 & 35 & 0 & 1 & 1 & -8.05 & 1 & 98.6 \\
      LA-ETC & 10 & 0 & 0 & 0 & 0 & 0.44 & 0 & 100.0 \\
      CIPS & 10 & 0 & 0 & 0 & 0 & 0.44 & 0 & 100.0 \\
      \midrule
      Periodic & 50 & 454 & 454 & 0 & 0 & 0.74 & 0 & 9.2 \\
      Boolean & 50 & 0 & 0 & 1 & 1 & -12.45 & 1 & 100.0 \\
      Spatial ETC & 50 & 10 & 0 & 1 & 1 & -8.05 & 1 & 98.0 \\
      LA-ETC & 50 & 0 & 0 & 0 & 0 & 0.44 & 0 & 100.0 \\
      CIPS & 50 & 0 & 0 & 0 & 0 & 0.44 & 0 & 100.0 \\
      \midrule
      Periodic & 100 & 238 & 220 & 0 & 0 & 0.74 & 0 & 4.8 \\
      Boolean & 100 & 0 & 0 & 1 & 1 & -12.45 & 1 & 100.0 \\
      Spatial ETC & 100 & 5 & 0 & 1 & 1 & -8.05 & 1 & 98.0 \\
      LA-ETC & 100 & 0 & 0 & 0 & 0 & 0.44 & 0 & 100.0 \\
      CIPS & 100 & 3 & 0 & 0 & 0 & 0.44 & 0 & 98.8 \\
      \bottomrule
    \end{tabular}%
  }
\end{table}

\begin{table}[htbp]
  \centering
  \caption{Robustness as Sampling Interval ($\Delta t$) Varies ($L_{\text{reg}} = 50\text{ms}$)}
  \label{tab:sweep2}
  \resizebox{\columnwidth}{!}{%
    \begin{tabular}{lrrrrrrrr}
      \toprule
      Scheduler & dt(ms) & Total Reg & Unnec Reg & Miss/Late & Fallbacks & Min H(m) & Violations & E\_lat (\%) \\
      \midrule
      Periodic & 5 & 454 & 454 & 0 & 0 & 0.74 & 0 & 9.2 \\
      Boolean & 5 & 0 & 0 & 1 & 1 & -12.45 & 1 & 100.0 \\
      Spatial ETC & 5 & 10 & 0 & 1 & 1 & -8.05 & 1 & 98.0 \\
      LA-ETC & 5 & 0 & 0 & 0 & 0 & 0.44 & 0 & 100.0 \\
      CIPS & 5 & 0 & 0 & 0 & 0 & 0.44 & 0 & 100.0 \\
      \midrule
      Periodic & 20 & 1 & 1 & 0 & 0 & 0.74 & 0 & 99.8 \\
      Boolean & 20 & 0 & 0 & 1 & 1 & -12.45 & 1 & 100.0 \\
      Spatial ETC & 20 & 1 & 0 & 1 & 1 & -8.05 & 1 & 99.8 \\
      LA-ETC & 20 & 0 & 0 & 0 & 0 & 0.44 & 0 & 100.0 \\
      CIPS & 20 & 0 & 0 & 0 & 0 & 0.44 & 0 & 100.0 \\
      \midrule
      Periodic & 50 & 250 & 250 & 0 & 0 & 0.74 & 0 & 50.0 \\
      Boolean & 50 & 0 & 0 & 1 & 1 & -12.45 & 1 & 100.0 \\
      Spatial ETC & 50 & 6 & 0 & 1 & 1 & -8.05 & 1 & 98.8 \\
      LA-ETC & 50 & 0 & 0 & 0 & 0 & 0.44 & 0 & 100.0 \\
      CIPS & 50 & 8 & 0 & 0 & 0 & 0.44 & 0 & 98.4 \\
      \bottomrule
    \end{tabular}%
  }
\end{table}

\begin{table}[htbp]
  \centering
  \caption{Cycle-Accurate Evaluation Micro-benchmark (gem5 ARM)}
  \label{tab:microbenchmark}
  \scriptsize
  \begin{tabular}{lrr}
    \toprule
    Scheduler & Median Eval Time ($\mu$s) & Speedup vs. LA-ETC \\
    \midrule
    CIPS (Ours) & \textbf{2} & \textbf{60.5$\times$} \\
    Spatial ETC & 3 & 40.3$\times$ \\
    Periodic    & 3 & 40.3$\times$ \\
    Boolean     & 3 & 40.3$\times$ \\
    LA-ETC      & 121 & 1.00$\times$ \\
    \bottomrule
  \end{tabular}
\end{table}

\begin{figure}[htbp]
    \centering
    \includegraphics[width=0.9\linewidth]{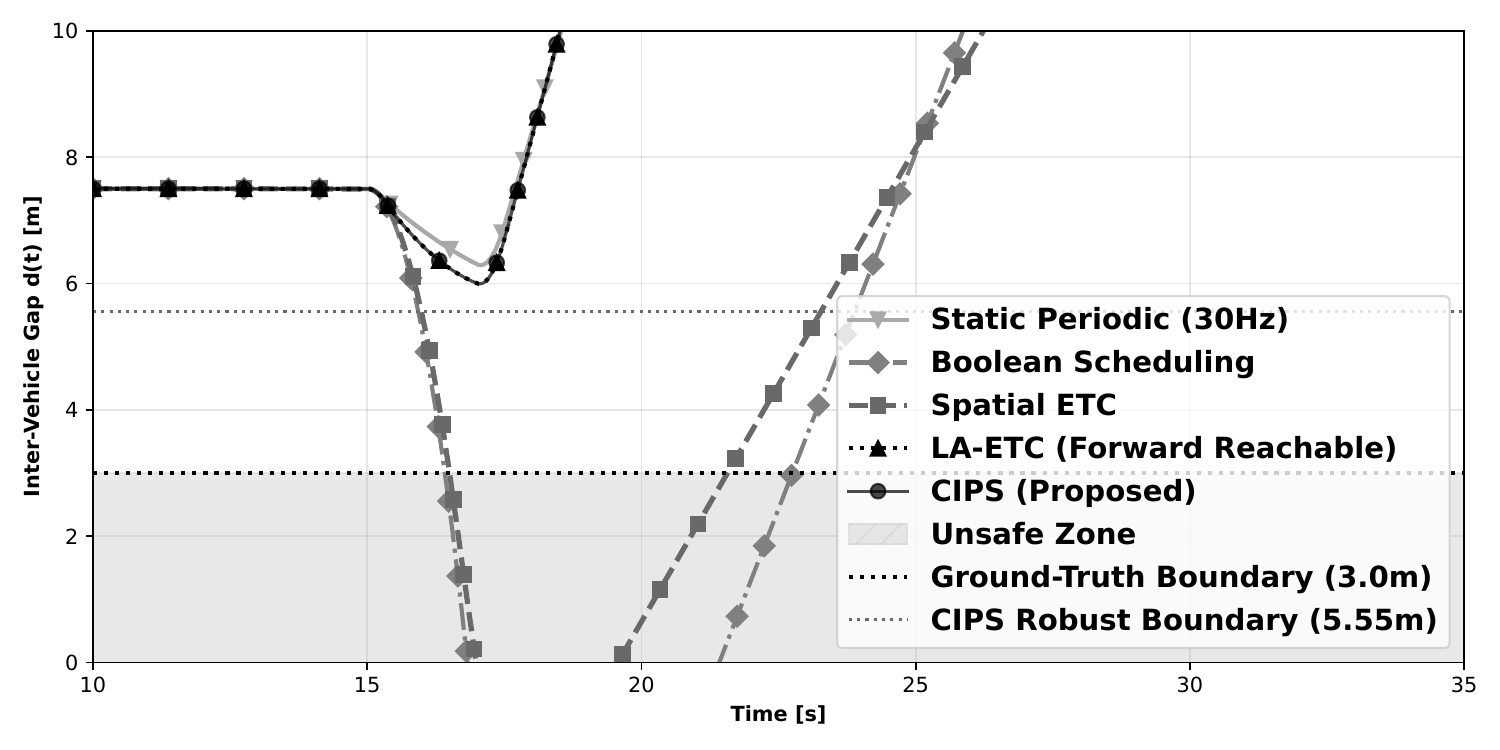}
    \caption{Inter-Vehicle State Validity Tracking illustrating CIPS
      adapting dynamically against Boolean and ETC failures.}
    \label{fig:comparison}
\end{figure}

The benchmark results highlight distinct trade-offs across the
evaluation ladder:
\begin{itemize}
\item \textbf{Periodic Scheduling:} Preserves safety but induces high
  numbers of unnecessary regenerations, degrading latency budget
  efficiency ($E_{\text{lat}}$ drops to 4.8\% at
  $L_{\text{reg}}=100\,\text{ms}$).
\item \textbf{Boolean and Spatial ETC:} Both suffer systemic safety
  violations because they lack temporal latency awareness, triggering
  fallback maneuvers too late (Figure~\ref{fig:comparison}).
\item \textbf{LA-ETC (Online Reachability):} Maintains safety across
  delays by explicitly propagating state trajectories over the latency
  horizon. However, this incurs substantial computational overhead
  ($121\,\mu\text{s}$) due to online numerical integration.
\item \textbf{CIPS Framework:} CIPS exhibits the same safety outcome and
  minimum-gap bound as LA-ETC across evaluated latency and sampling
  settings, while requiring substantially ($60.5\times$) lower evaluation
  overhead. Notably, the results demonstrate substantial latency budget
  efficiency ($E_{\text{lat}}$), remaining near optimal (98.4\% to
  100.0\%). Unlike LA-ETC, whose runtime trigger requires access to and
  numerical propagation of plant dynamics, CIPS consumes only the
  precompiled scalar persistence functional.
\end{itemize}

\section{Related Work and Interface Comparison}
\label{sec:comparison}

Ensuring physical safety under discrete computation and non-zero
latencies is a core cyber-physical systems (CPS) challenge. CIPS
complements existing synthesis and spatial verification techniques by
introducing a temporal compilation layer that normalizes heterogeneous
spatial certificates into consumable runtime contracts.

\textbf{Event-Triggered Control and Predictive Safety Filters.}
Event-Triggered Control (ETC) reduces overhead via state-dependent
updates~\cite{tabuada2007event, heemels2012introduction}, and
latency-aware variants like lookahead ETC (LA-ETC) or predictive safety
filters explicitly incorporate execution delays by evaluating future
trajectories online~\cite{gurriet2018towards, brunke2022safe}. However,
these approaches require continuous online plant vector field
evaluations and numerical integration of plant dynamics, incurring high
computational overhead ($\approx 60.5\times$ more, per
Section~\ref{sec:case_study}). CIPS shifts this burden offline,
eliminating online forward reachability by compiling spatial decay
bounds into a scalar temporal functional $\Phi(x)$. This provides $O(1)$
latency-aware scheduling and drastically reduces deterministic
evaluation overhead.

\textbf{Self-Triggered Safety and Sampled-Data CBFs.} Self-triggered and
sampled-data Control Barrier Functions (CBFs) predict safe sampling
instants or address inter-sampling gaps, but they tie temporal intervals
directly to specific continuous controllers~\cite{anta2010to,
  taylor2022safety, yang2019self}. CIPS decouples this dependency by
compiling any spatial certificate into a platform-agnostic persistence
contract that manages execution budgets and fallback transitions
independently.

\textbf{Reachability-Based Runtime Monitoring.} Hamilton-Jacobi (HJ)
reachability provides exact, non-conservative safety
certificates~\cite{bansal2017hamilton, mitchell2005time}, but evaluating
multi-dimensional implicit value functions online is computationally
prohibitive. CIPS complements offline HJ toolboxes by compiling their
computed spatial domains into fast, microsecond-scale temporal
functionals for runtime execution.

\textbf{Runtime Assurance, Simplex, and Information Staleness.} Runtime
Assurance (RTA) safely overrides primary controllers with verified
backups~\cite{seto1998simplex, sha2001using, cofer2020run}, while Age of
Information (AoI) models data staleness~\cite{kaul2012real,
  yates2021age}. CIPS bridges these paradigms by formalizing staleness
as a consumable temporal asset, yielding an explicit latency-aware
trigger ($\Phi(x) \le L_{\text{reg}} + 2\Delta t$) that guarantees safe mode
switches before physical violations occur.

\section{Conclusion and Future Work}
\label{sec:conclusion}

We introduced CIPS, a contract-driven runtime abstraction that bridges
the gap between continuous physical safety and discrete cyber execution.
By formalizing the Certificate-to-Persistence compiler, we demonstrated
that continuously differentiable spatial certificates can be
systematically transformed into consumable, unit-rate temporal
contracts. Crucially, CIPS shifts the computational burden of plant
model evaluation entirely offline. This decoupling of physical
certificate derivation from runtime scheduling achieves provable global
hybrid safety invariance under bounded latency and sampling, entirely
eliminating the need for online numerical integration.

Future work includes: (1) Extending the compiler for non-smooth barrier
functions and neural network-based bounds. (2) Adapting contracts to
handle probabilistic safety bounds under process noise and environmental
uncertainty. (3) Using advanced reachability bounds to reduce spatial
decay conservatism, yielding tighter, more efficient temporal contracts.

\bibliographystyle{IEEEtran}
\bibliography{references}

@inproceedings{gurriet2018towards,
  title={Towards a framework for realizable safety critical control through active set invariance},
  author={Gurriet, Thomas and Singletary, Andrew and Reher, Jacob and Babuska, Robert and Ames, Aaron D},
  booktitle={2018 ACM/IEEE 9th International Conference on Cyber-Physical Systems (ICCPS)},
  pages={98--106},
  year={2018},
  organization={IEEE}
}

@article{tabuada2007event,
  title={Event-triggered real-time scheduling of stabilizing control tasks},
  author={Tabuada, Paulo},
  journal={IEEE Transactions on Automatic control},
  volume={52},
  number={9},
  pages={1680--1685},
  year={2007},
  publisher={IEEE}
}

@inproceedings{heemels2012introduction,
  title={An introduction to event-triggered and self-triggered control},
  author={Heemels, WPMH and Johansson, Karl Henrik and Tabuada, Paulo},
  booktitle={{IEEE Conference on Decision and Control (CDC)}},
  pages={3270--3285},
  year={2012}
}

@article{anta2010to,
  title={To sample or not to sample: Self-triggered control for nonlinear systems},
  author={Anta, Adolfo and Tabuada, Paulo},
  journal={IEEE Transactions on Automatic Control},
  volume={55},
  number={9},
  pages={2030--2042},
  year={2010},
  publisher={IEEE}
}

@inproceedings{taylor2022safety,
  title={Safety of sampled-data systems with control barrier functions via approximate discrete time models},
  author={Taylor, Andrew J and Dorobantu, Victor D and Cosner, Ryan K and Yue, Yisong and Ames, Aaron D},
  booktitle={2022 IEEE 61st Conference on Decision and Control (CDC)},
  pages={7127--7134},
  year={2022},
  organization={IEEE}
}

@inproceedings{yang2019self,
  title={Self-triggered control for safety critical systems using control barrier functions},
  author={Yang, Guang and Belta, Calin and Tron, Roberto},
  booktitle={2019 American control conference (ACC)},
  pages={4454--4459},
  year={2019},
  organization={IEEE}
}

@inproceedings{bansal2017hamilton,
  title={Hamilton-jacobi reachability: A brief overview and recent advances},
  author={Bansal, Somil and Chen, Mo and Herbert, Sylvia and Tomlin, Claire J},
  booktitle={2017 IEEE 56th annual conference on decision and control (CDC)},
  pages={2242--2253},
  year={2017},
  organization={IEEE}
}

@article{brunke2022safe,
  title={Safe learning in robotics: From learning-based control to safe reinforcement learning},
  author={Brunke, Lukas and Greeff, Melissa and Hall, Adam W and Yuan, Zhaocong and Zhou, Siqi and Panerati, Jacopo and Schoellig, Angela P},
  journal={Annual Review of Control, Robotics, and Autonomous Systems},
  volume={5},
  number={1},
  pages={411--444},
  year={2022},
  publisher={Annual Reviews}
}

@article{mitchell2005time,
  title={A time-dependent Hamilton-Jacobi formulation of reachable sets for continuous dynamic games},
  author={Mitchell, Ian M and Bayen, Alexandre M and Tomlin, Claire J},
  journal={IEEE Transactions on Automatic Control},
  volume={50},
  number={7},
  pages={947--957},
  year={2005},
  publisher={IEEE}
}

@article{sha2001using,
  title={Using simplicity to control complexity},
  author={Sha, Lui},
  journal={IEEE Software},
  volume={18},
  number={4},
  pages={20},
  year={2001},
  publisher={IEEE Computer Society}
}

@inproceedings{cofer2020run,
  title={Run-time assurance for learning-enabled systems},
  author={Cofer, Darren and Amundson, Isaac and Sattigeri, Ramachandra and Passi, Arjun and Boggs, Christopher and Smith, Eric and Gilham, Limei and Byun, Taejoon and Rayadurgam, Sanjai},
  booktitle={NASA Formal Methods Symposium},
  pages={361--368},
  year={2020},
  organization={Springer}
}

@inproceedings{seto1998simplex,
  title={The simplex architecture for safe online control system upgrades},
  author={Seto, Danbing and Krogh, Bruce and Sha, Lui and Chutinan, Alongkrit},
  booktitle={Proceedings of the 1998 American Control Conference. ACC (IEEE Cat. No. 98CH36207)},
  volume={6},
  pages={3504--3508},
  year={1998},
  organization={IEEE}
}

@inproceedings{kaul2012real,
  title={Real-time status: How often should one update?},
  author={Kaul, Sanjit and Yates, Roy and Gruteser, Marco},
  booktitle={2012 Proceedings IEEE INFOCOM},
  pages={2731--2735},
  year={2012},
  organization={IEEE}
}

@article{yates2021age,
  title={Age of information: An introduction and survey},
  author={Yates, Roy D and Sun, Yin and Brown, D Richard and Kaul, Sanjit K and Modiano, Eytan and Ulukus, Sennur},
  journal={IEEE Journal on Selected Areas in Communications},
  volume={39},
  number={5},
  pages={1183--1210},
  year={2021},
  publisher={IEEE}
}

@article{bhat2000finite,
  title={Finite-time stability of continuous autonomous systems},
  author={Bhat, Sanjay P. and Bernstein, Dennis S.},
  journal={SIAM Journal on Control and Optimization},
  volume={38},
  number={3},
  pages={751--766},
  year={2000},
  publisher={Society for Industrial and Applied Mathematics}
}

@inproceedings{ames2019control,
  title={Control barrier functions: Theory and applications},
  author={Ames, Aaron D and Coogan, Samuel and Egerstedt, Magnus and Notomista, Gennaro and Sreenath, Koushil and Tabuada, Paulo},
  booktitle={2019 18th European control conference (ECC)},
  pages={3420--3431},
  year={2019},
  organization={Ieee}
}

@book{alur2015principles,
  title={Principles of cyber-physical systems},
  author={Alur, Rajeev},
  year={2015},
  publisher={MIT press}
}

\end{document}